\newtheorem{lemm}{Lemma}
\let\olditemize\itemize
\renewcommand\itemize{\olditemize\setlength\itemsep{1pt}\parskip0pt}
\let\oldenumerate\enumerate
\renewcommand\enumerate{\oldenumerate\setlength\itemsep{1pt}\parskip0pt}
\begin{document}
	
	\title{Clover: an Anonymous Transaction Relay Protocol\\for the Bitcoin P2P Network\thanks{This work was supported by Project RTI2018-102112-B-I00 (AEI/FEDER,UE).} 
	}
	
	
	\author[]{Federico Franzoni} 
\author[]{Vanesa Daza} %
\affil[]{Universitat Pompeu Fabra, Barcelona, Spain\\
\texttt{\{federico.franzoni,vanesa.daza\}@upf.edu}
}
	\date{}

	\maketitle
	
	\begin{abstract}
		The Bitcoin P2P network currently represents a reference benchmark for modern cryptocurrencies.
		Its underlying protocol defines how transactions and blocks are distributed through all participating nodes.
		To protect user privacy, the identity of the node originating a message is kept hidden.
		However, an adversary observing the whole network can analyze the spread pattern of a transaction to trace it back to its source.
		This is possible thanks to the so-called \textit{rumor centrality}, which is caused by the symmetry in the spreading of \textit{gossip}-like protocols.
		
		Recent works try to address this issue by breaking the symmetry of the Diffusion protocol, currently used in Bitcoin, and leveraging proxied broadcast.
		Nonetheless, the complexity of their design can be a barrier to their adoption in real life.
		In this work, we propose Clover, a novel transaction relay protocol that protects the source of transaction messages with a simple, yet effective, design.
		Compared to previous solutions, our protocol does not require building propagation graphs, and reduces the ability of the adversary to gain precision by opening multiple connections towards the same node.
		Experimental results show that the deanonymization accuracy of an \textit{eavesdropper} adversary against Clover is up to 10 times smaller compared to Diffusion.
	\end{abstract}
	
	\section{Introduction}
	Over the past few years, Bitcoin \cite{nakamoto2008peer} has risen to an unprecedented level of popularity.
	Although many users still believe this system to be anonymous, studies showed how it is relatively easy to link transactions to real identities~\cite{androulaki2013evaluating,reid2013analysis,herrera2015research}.
	Most deanonymization techniques work by tracing transactions on the blockchain and combining them with publicly available knowledge~\cite{meiklejohn2013fistful,nick2015data,neudecker2017could}.
	
	However, a less-known approach is to link transaction messages to their originating node in the underlying P2P network \cite{biryukov2014deanonymisation,koshy2014analysis,biryukov2015tor}.
	This approach is based on the observation that the first device to broadcast a transaction in the network is likely the one that created it.
	To implement this approach, an adversary typically deploys one or more nodes connecting to all reachable peers in the network, and listens for incoming transaction messages~\cite{fanti2017deanonimization}.
	Transactions are then linked to their source by using an estimation strategy.
	This type of adversary is known as the \textit{eavesdropper} adversary.
	
	A recent work by Fanti et al.~\cite{fanti2017anonymity} shows that the Diffusion protocol, currently used in Bitcoin, has poor anonymity guarantees against this adversary.
	In particular, an attacker can obtain high levels of precision even when controlling just few nodes in the network and using a naive estimation strategy.
	The authors identify the problem in the symmetry of the spreading pattern: since transactions spread from each node to all its peers, it is always possible to determine the approximate point in the network where the propagation started.
	This phenomenon is known as \textit{rumor centrality} and is specific to all gossip-like systems~\cite{shah2012rumor,shah2011rumors}.
	
	Following these findings, few solutions have been proposed that reduce the ability of the adversary to identify the source of a transaction \cite{venkatakrishnan2017dandelion,franzoni2020improving}.
	These proposals break the symmetry in the propagation pattern by having nodes delegate the broadcast of new transactions to other nodes of the network.
	In particular, transactions are first propagated (or \textit{proxied}) linearly over a path of nodes, and then broadcast using the Diffusion protocol.
	The way nodes are chosen during this initial phase is defined by the protocol, and determines the security and complexity of the solution.
	In Dandelion \cite{venkatakrishnan2017dandelion}, reachable nodes in the network build a propagation graph passing through every node (i.e., an Hamiltonian Circuit) and always propagate new transaction over the same path. Given the risk of the adversary learning the topology of such graph, a new graph has to be built periodically.
	In \cite{franzoni2020improving}, the initial phase alternates reachable and unreachable nodes with the goal of concealing the propagation process. Since the adversary can control multiple unreachable peers for any given node, the authors suggest the use of \textit{bucketing} \cite{bucketing} to mitigate her ability of tracking transactions.
	
	While these solutions sensibly improve the anonymity properties of transaction propagation, their adoption is hindered by their complexity.
	Additionally, in both protocols, the adversary can gain an advantage by learning sensitive information on the initial phase, such as the nodes in the propagation path (in Dandelion) or the transactions being proxied (in \cite{franzoni2020improving}).
	
	In this paper, we propose a new propagation protocol that breaks the symmetry by separating inbound and outbound connections in the relay pattern.
	Our design is simpler than previous solutions, making its analysis and implementation easier.
	Additionally, we minimize security concerns by limiting the ability of the adversary to learn sensitive information: our protocol does not require build a propagation graph, and prevents the adversary from tracking transactions by isolating inbound connections.
	We formally analyze our protocol and experimentally evaluate it by using a proof of concept in a simulated environment.
	Our results show that, compared to Diffusion, our protocol reduces the deanonymization precision for the eavesdropper adversary from 0.6 to just 0.05, in the best case, and from 0.7 to 0.3 in the worst case.

	\section{Background}
	\label{sec:spreading}
	
	\subsection{The Bitcoin P2P network}
	The Bitcoin P2P network is composed of nodes randomly connected among each other.
	Peers of a node are distinguished between outbound, whose connection was opened by the node, and inbound, from which the node accepted an incoming connection.
	According to the Bitcoin reference client, each node establishes and maintains 8 outbound connections and, if reachable, up to 117 inbound connections.
	Thus, reachable nodes can have up to 125 connections, while unreachable nodes are limited to 8.
	
	However, this limit is not enforced, making nodes able to establish as many connections as needed.
	This is particularly useful for measuring tools that connect to all reachable nodes, as well as for the so-called \textit{supernodes}, which are often used by mining pools to maximize their connectivity with the network.
	At the same time, malicious actors can exploit this feature to improve the effectiveness of their attacks.
	
	\subsection{Transaction propagation}
	\label{sec:txprop}
	A transaction \textit{tx} is transmitted from a node \textit{A} to a node \textit{B} following a three-step process:
	\begin{enumerate}
		\item Node \textit{A} announces \textit{tx} to node \textit{B} by sending an \texttt{INV} message, containing the hash of the transaction (\textit{h(tx)});
		\item If \textit{h(tx)} is unknown, node \textit{B} requests \textit{tx} to node \textit{A} by sending a \texttt{GETDATA} message is sent, containing \textit{h(tx)};
		\item Node \textit{A} sends \textit{tx} to node \textit{B} via a \texttt{TX} message.
	\end{enumerate}
	This announcement-based propagation mechanism is used to avoid transmitting a transaction twice to the same node.
	To optimize network data consumption, \texttt{INV} messages usually aggregate multiple transaction hashes.
	
	Transactions are spread from a node to its peers following a gossip-like protocol known as \textit{Diffusion}, which works this way: 
	when a new transaction is created or received by a peer, it is announced to all connected peers; 
	before sending the \texttt{INV} message, an individual random delay is applied to each peer.
	
	\subsection{Deanonymization strategies}
	In Bitcoin, the broadcast and relay operations follow the same rules.
	Therefore, when nodes create a transaction, they propagate it the same way as transactions received from their peers.
	This approach is used to prevent leaking the identity of the node that originates the transaction.
	However, as we already mentioned, it is possible to determine the source of a transaction
	by observing its propagation through the network.
	
	In particular, an eavesdropper adversary, which connects to all reachable nodes in the network, can adopt different strategies to estimate the source of a transaction. 
	The simplest method, called \textit{first-timestamp} or \textit{first-spy} estimator, consists in linking each transaction to the first node that announces it (to the adversary).
	The rationale behind this method is that the first node to announce a transaction in the network is likely the one that generated it.
	By connecting to all nodes, the adversary is always likely to receive each transaction from its source.
	This strategy has been proved to reach very high levels of accuracy against Diffusion, even when the adversary controls only few nodes~\cite{fanti2017anonymity}
	
	More advanced techniques are theoretically possible when the adversary knows the network topology \cite{fanti2017anonymity}.
	These techniques take into account the propagation of transactions to exploit the rumor-centrality property of the Diffusion protocol.
	In particular, these techniques are based on the observed order in which nodes announce the transaction.
	The underlying assumption is that nodes that are (topologically) closer to the source will announce the transaction earlier than those that are farther away.
	Although these methods are theoretically more precise than the first-spy estimator, their adoption is conditioned to the knowledge of the network graph, which is currently obfuscated by the Bitcoin protocol.

	\section{Adversarial Model}
	\label{sec:adversary}
	We consider the eavesdropper adversary defined in \cite{fanti2017anonymity}, which is based on practical attacks such as \cite{biryukov2014deanonymisation} and \cite{koshy2014analysis}.
	This adversary makes use of a supernode that connects to all reachable nodes in the network.
	For each node, multiple connections can be established by using different IP:port addresses, making them look as coming from different entities.
	In particular, the adversary can fill up all unused inbound slots of a target node.
	
	Additionally, we extend this adversary by letting it deploy an arbitrary number of reachable nodes 
	with the objective of being selected as an outbound peer by other nodes.
	This extension allows the adversary to improve precision against our protocol.
	
	The goal of the adversary is to determine the source node for all received transactions.
	To this purpose, adversarial nodes listen to all messages relayed by their peers, logging their content and timestamp.
	We assume the adversary has a unified view of the logs from all the nodes under its control.
	Furthermore, for each honest node, the adversary maintains a \textit{deanonymization set}, which contains all the transactions that have been possibly generated by that node.
	To deanonymize transactions, the adversary adopts the first-spy estimator: each transaction is linked to the first node that announces it (to any of the nodes controlled by the adversary).

	\section{The Clover protocol}
	In this section, we describe Clover, our new transaction propagation protocol.
	We detail and motivate its design with reference to the adversary model.
	
	\subsection{Protocol Overview}
	Similar to previous solutions \cite{venkatakrishnan2017dandelion,franzoni2020improving}, Clover protects the source of a transaction by means of \textit{proxying}.
	This consists in delegating the broadcast of new transactions to other nodes.
	Specifically, when a node creates a new transaction, it selects one of its peers and sends it the transaction.
	The selected node, called \textit{proxy}, is then responsible for broadcasting the transaction to the rest of the network.
	Proxying allows moving the apparent origin of the propagation of a transaction from its source to a different node of the network.
	
	Note that proxying drastically reduces the effectiveness of the first-spy estimator approach, since it is highly unlikely for the eavesdropper adversary to receive a transaction from its source.
	However, if the adversary controls the selected proxy, she could be able to distinguish a proxied transaction and simply link it to the sender node (i.e., deanonymize it).
	
	To mitigate this risk, we use transaction \textit{mixing}.
	This consists in making nodes proxy their new transactions along with transactions created (and proxied) by other nodes.
	This strategy reduces the ability of an adversarial node of determining whether a proxied transaction was created by the sender or a different node.
	In particular, the more the transactions used for mixing, the lower the precision of the adversary.
	We call \textit{mixing set} the set of transactions used by a node for mixing.
	
	To enable mixing, new transactions are proxied over multiple nodes before being broadcast (multi-hop proxying).
	In other words, a new transaction propagates in two phases: the \textit{proxying phase}, in which the transaction is relayed through a number of proxy nodes, and the \textit{diffusing phase}, where the transaction is broadcast and propagated following the Diffusion protocol.
	The switch between the proxying phase and the diffusing phase can occur at any hop, and it is determined probabilistically by the node that receives it.
	Specifically, when a node receives a transaction in the proxying phase, it decides whether to relay it to another proxy (thus keeping it in the proxying phase) or broadcast it with Diffusion (thus switching to the diffusing phase).
	We call \textit{proxy transaction} a transaction in the proxying phase, and \textit{diffused transaction} a transaction in the diffusing phase.
	
	To improve anonymity, only proxy transactions are used for mixing.
	This is motivated by the observation that diffused transactions are likely to be known by the adversary.
	In fact, by connecting to all reachable nodes, an eavesdropper adversary is always among the first to receive such transactions.
	As such, the adversary is able to distinguish diffused transactions and exclude them from the deanonymization set, thus improving precision.
	This means that diffused transactions do not actually contribute to mixing.
	On the other hand, proxy transactions are likely unknown to the adversary and are thus ideal for mixing.
	In particular, when receiving a proxy transaction from a node, the adversary is not able to determine whether it was created by such node or by one of its peers.
	Hence, the anonymity of the mixing set solely depends on the number of proxy transactions it contains.
	Therefore, we maximize anonymity by having nodes include only proxy transactions in their mixing set.
	
	In order to make nodes able to distinguish proxy transactions, we propagate them using a separate protocol message, called \texttt{PTX}.
	Instead, diffused transactions will be transmitted using the standard \texttt{TX} message.
	
	To further mitigate the risk of selecting an adversarial proxy, we only relay new transactions to outbound peers.
	This strategy is motivated by the fact that the adversary can control an arbitrary number of inbound connections.
	Instead, she has limited influence on the outbound peers, which are chosen at random among all reachable nodes in the network.
	
	Following the same reasoning, we exclude from the mixing set the transactions received from inbound peers.
	We do this to limit the ability of the adversary to track transactions in the mixing set of a node.
	In fact, an adversary being used as a proxy by a node, and controlling many inbound connections towards the same node, can track all the transactions she relays through these connections and then exclude them from the corresponding deanonymization set to improve precision.
	At the same time, to allow a correct propagation, we have nodes relay proxy transactions from inbound peers to other inbound peers.
	
	In summary, proxy transactions received from an outbound peer are relayed to another outbound peer, while proxy transactions received from an inbound peer are relayed to another inbound peer.
	We depict this scheme in Figure \ref{fig:clover}.
	The name Clover has been chosen to recall the four-way pattern of our protocol.
	\begin{figure}[tbp]
		\centerline{\includegraphics[width=0.3\columnwidth]{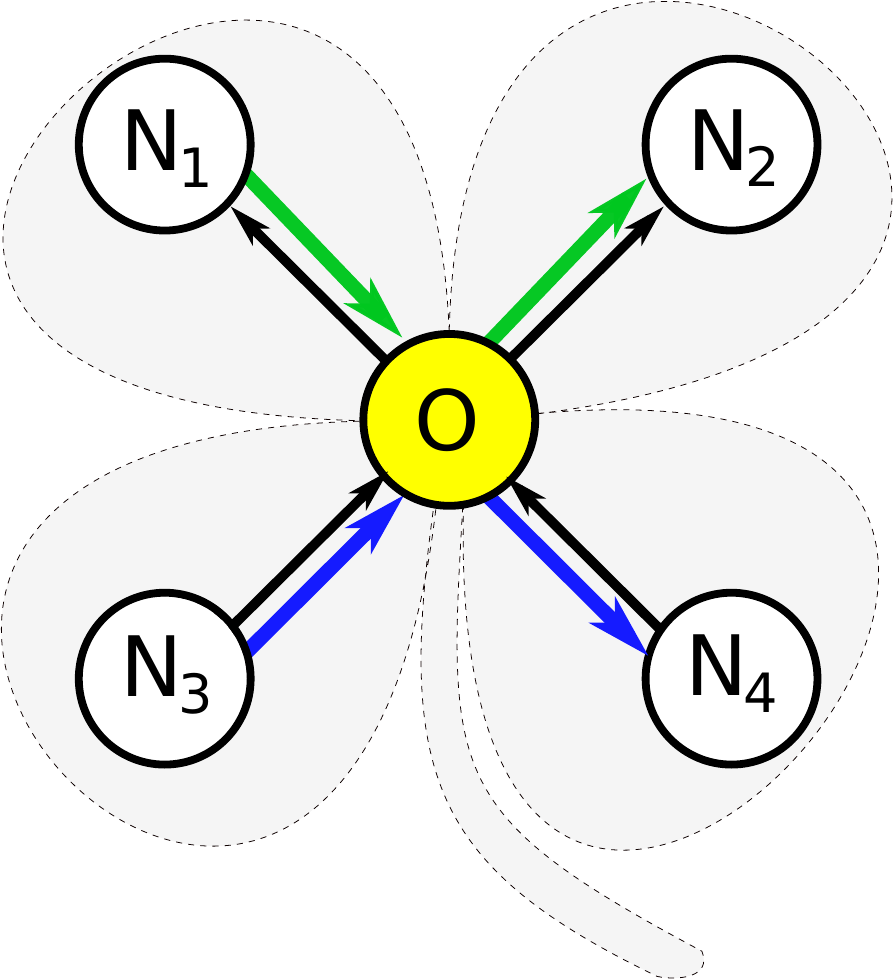}}
		\caption{The Clover relay protocol pattern: black arrows represent the direction of the connection; colored arrows represent relays of proxy transactions.}
		\label{fig:clover}
	\end{figure}
	
	\subsection{Protocol Design}
	In this section, we detail the Clover protocol design and describe the rules followed by network nodes.
	
	\subsubsection{Proxy transactions}
	We introduce a new protocol message \texttt{PTX}, used to propagate proxy transactions.
	The \texttt{PTX} message has the same structure as \texttt{TX} and is only used to mark a transaction in the proxying phase.
	Like the \texttt{TX} message, the \texttt{PTX} message contains the full transaction data.
	
	During the proxying phase, transactions are propagated directly from one node to another, without previously announcing them via \texttt{INV} messages.
	In fact, the standard three-step transmission is meant to avoid sending a transaction twice to the same node, which is likely to occur in gossip-like protocols.
	However, since proxy transactions are propagated over a linear path, nodes are rarely expected to receive them twice.
	Instead, the receiver of a proxy transaction is always expected not to know it.
	
	An upside of this strategy is that it allows us to substantially reduce the propagation delay introduced by the proxying phase.
	Specifically, since each relay operation only requires one message instead of three, the delay is reduced by one third.
	
	\subsubsection{Transaction Propagation}
	When a node creates a new transaction $tx$, it selects a random proxy among its outbound peers, and sends it $tx$ using a \texttt{PTX} message.
	This marks the beginning of the proxying phase for $tx$.
	
	During this phase, at each hop, the transaction is relayed (re-proxied) to another node, or broadcast via Diffusion.
	A node $N$ can receive a proxy transaction $tx$ from both outbound and inbound peers.
	When this occurs, $N$ behaves like follows:
	if $tx$ is received from an outbound peer, $N$ relays it to another outbound peer, chosen at random;
	if $tx$ is received from an inbound peer, $N$ broadcasts it with probability $p$, or relays it (with probability $1{-}p$) to an inbound peer, chosen at random.
	The probability $p$ is defined at a global level, and determines the average number of hops through which a transaction is relayed during the proxying phase.
	When a transaction gets broadcast, it enters the diffusing phase and follows the standard Diffusion protocol.
	
	Note that the broadcast step can only occur when $tx$ is received from inbound peers.
	In other words, proxy transactions received from outbound peers are always re-proxied.
	This choice allows nodes to maximize their mixing set by using all suitable transactions (i.e., those received from outbound peers).
	
	\subsubsection{Timeout}
	As the diffusion step is probabilistic, a transaction could be relayed too many times, producing an excessive delay in its propagation.
	
	To mitigate this risk, when a node proxies a transaction, it sets a timeout to verify that it gets correctly diffused.
	To do so, the node monitors \texttt{INV} messages coming from their outbound peers.
	When the timeout expires, the node checks if the majority of the outbound peers has advertised the transaction.
	If so, the transaction is considered as correctly diffused; otherwise, the node broadcasts the transaction using Diffusion.
	
	Again, we only consider outbound peers because they are the least likely to be controlled by the adversary.
	If we relied on inbound peers, an adversary, controlling the selected proxy and the majority of inbound peers, could trick the node by simply advertising the transaction from all such peers.
	
	Note that the timeout is applied to all proxied transactions, regardless of being new or relayed.
	This prevents the adversary from distinguishing the two cases, which could lead to deanonymization attacks.	
	A default value for the timeout can be defined after performing experiments on the network.
	However, each node might choose its own value, depending on the desired security level.
	
	\subsubsection{Clover Procedures}
	We first define the \textit{proxy} procedure as in Algorithm \ref{alg:proxy}.
	\vspace{-5pt}
	\begin{algorithm}[h]
		\caption{Proxy procedure}
		\label{alg:proxy}
		\begin{algorithmic}[1]
			\Statex ENV: Timeout $t$; NodeSet $OutPeers$
			\Procedure{Proxy}{Transaction $tx$, NodeSet $ProxySet$}
			\State $ProxySet := ProxySet \setminus \{tx.source\}$
			\State $proxy = pickRandomNode(ProxySet)$
			\State Send $\texttt{PTX}(tx)$ to $proxy$
			\State $wait(t)$
			\State confirmations := Count( \texttt{INV}($tx$) received from $OutPeers$ )
			\If {$confirmations < (|OutPeers|/2 + 1)$}
			\State \Call{Diffuse}{tx}
			\EndIf
			\EndProcedure
		\end{algorithmic}
	\end{algorithm}
	
	This procedure takes as inputs the transaction to be proxied ($tx$) and a set of peers ($ProxySet$) among which to choose the proxy.
	The procedure picks a random node from $ProxySet$ and sends it a \texttt{PTX} message containing $tx$.
	If the transaction is being relayed, its sender is excluded from the candidates (to avoid sending the message back to the sender).
	After sending the \texttt{PTX} message, a timeout $t$ is set.
	While $t$ is not expired, the node collects \texttt{INV} messages from its outbound peers.
	When $t$ expires, the node checks if the majority of outbound peers has announced $tx$.
	If so, the transaction is considered as diffused; otherwise, the transaction is broadcast.
	
	We then define the Clover propagation rules as in Algorithm \ref{alg:rules}.
	\vspace{-5pt}
	\begin{algorithm}[h]
		\caption{Clover Propagation Rules}
		\label{alg:rules}
		\begin{algorithmic}[1]
			\State ENV: Probability $p$; NodeSet $OutPeers$, $InPeers$
			\Procedure{Clover}{}
			\If{\text{ New(Transaction $tx$)}}
			\State \Call{Proxy}{tx, OutPeers}
			\EndIf
			
			\If{\text{Receive($\texttt{PTX}(tx)$)}}
			\If{$tx.source$ in $OutPeers$}
			\State \Call{Proxy}{tx, OutPeers}
			\Else
			\State $d = getRandProb()$
			\If{\text{$d < p$}}
			\State \Call{Diffuse}{tx}
			\Else 
			\State \Call{Proxy}{tx, InPeers}
			\EndIf
			\EndIf
			\EndIf
			\EndProcedure
		\end{algorithmic}
	\end{algorithm}
	
	When a node creates a new transaction $tx$, or receives \texttt{PTX}($tx$) from an outbound peer, it runs $Proxy(tx, OutPeers)$;
	if a \texttt{PTX}($tx$) message is received from an inbound peer, the node runs $Diffuse(tx)$ with probability $p$, and $Proxy(tx, InPeers)$ with probability $1{-}p$.
	
	\section{Discussion}
	\label{sec:analysis}
	In this section, we study the anonymity properties of the Clover protocol against an eavesdropper adversary using the first-spy estimator.
	
	\paragraph{Notation.}
	We use $R$ to denote the set of reachable nodes in the network and $S$ to denote the subset of reachable nodes controlled by the adversary (spies).
	Without loss of generality, we let $I$ and $O$ represent the average set of inbound and outbound peers of a node in the network.
	
	We use the term \textit{source} or \textit{origin} of a transaction to indicate the node that created it.
	Instead, we use the term \textit{sender} to indicate the node that sends a specific message.
	
	For the sake of readability, Table \ref{table:symbols} summarizes all parameters used in this section.

	\begin{table}[t]
		\centering
		\begin{tabular}{cl}
			\hline
			Parameter 		& Description 					  \\ 
			\hline
			$\mathcal{A}$   & The eavesdropper adversary \\
			$R$             & Set of reachable nodes          \\
			$O$             & Average set of outbound peers of a node \\
			$I$             & Average set of inbound peers of a node  \\
			
			$p$             & Probability of diffusion        \\
			$S$             & Set of nodes in R controlled by $\mathcal{A}$  \\
			
			$\rho_I$        & Average number of transactions received from an inbound peer		 \\
			$\sigma_I$      & Average number of transactions sent to an inbound peer      			 \\
			$\rho_O$        & Average number of transactions received from an outbound peer 			 \\
			$\sigma_O$      & Average number of transactions sent to an outbound peer      			 \\
			
			$\mathcal{M}$	& Average mixing set of a node \\
			$a$				& Adversarial outbound peers of a node \\
			$g$				& Average number of transactions generated by a node \\
			
			\hline
		\end{tabular}
		\caption{Parameter definitions}
		\label{table:symbols}
	\end{table}
	
	\subsection{Security}
	We consider an eavesdropper adversary $\mathcal{A}$ as described in Section \ref{sec:adversary}.
	As we will show, $\mathcal{A}$ gains no advantage by connecting to all nodes, nor by establishing multiple connections towards the same node.
	In fact, in our protocol, new transactions are only relayed through outbound connections, making the inbound peers controlled by $\mathcal{A}$ irrelevant to deanonymization.
	Instead, $\mathcal{A}$ gains precision by deploying more reachable nodes, as this increases her chances of being selected as a proxy node for new transactions.
	
	To analyze the anonymity properties of Clover, two important aspects must be studied first. 
	On the one hand, we need to know the probability of selecting an adversarial node as proxy for new transactions.
	On the other hand, we need to determine the size of the average mixing set for a single node.
	
	With these values, we can calculate the precision of $\mathcal{A}$ in deanonymizing proxy transactions, as well as its overall precision against all new transactions.
	Note that $\mathcal{A}$ will mainly target proxy transactions, as other transactions are unlikely to be announced by their source, in the Clover protocol.
	
	\paragraph{Proxy Selection}
	For the sake of simplicity, we assume each reachable node has the same probability of being selected as outbound peer when a new node joins the network\footnote{Although this assumption is theoretically sound, in the real Bitcoin network, well-established nodes tend to have more connections, especially compared to newly-joined nodes. This fact lowers the probability of connecting to the adversary, unless she is in control of a large portion of well-established nodes.}.
	
	Thus, we compute the probability of selecting an adversarial proxy for a single transaction as follows: 
	\begin{lemm}
		\label{lem:advprox}
		Let $R$ be the set of reachable nodes, and $S$ be the subset of nodes in $R$ controlled by the eavesdropper adversary $\mathcal{A}$,
		then the probability $P_\mathcal{A}$ of selecting an adversarial node as the proxy for a new transaction is:
		
		\begin{equation}
			P_\mathcal{A} = \frac{|S|}{|R|}.
		\end{equation}
	\end{lemm}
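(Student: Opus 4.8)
The plan is to reduce the statement to a single observation: under the stated uniform-connection assumption, the proxy chosen for a fresh transaction is a uniformly random reachable node, so it lies in $S$ with probability $|S|/|R|$. First I would recall from the protocol description — the branch of Algorithm~\ref{alg:rules} that calls $Proxy(tx, OutPeers)$ on $New(tx)$, together with line~3 of Algorithm~\ref{alg:proxy} — that the creator of a new transaction picks its proxy uniformly at random from its set $O$ of outbound peers. Next I would invoke the assumption that every reachable node is equally likely to be chosen as an outbound peer; this means the outbound set of the creating node is a uniformly random size-$|O|$ subset of $R$ (more precisely of $R$ minus the creator itself, a distinction I will absorb into the paper's simplifying approximation since $|R|$ is large).

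The key step is then a short symmetry argument. Because each of the $|O|$ outbound slots has the same marginal distribution — uniform over $R$ — and the proxy is chosen uniformly among those $|O|$ slots independently of their identities, the selected proxy is again uniform over $R$; hence $P_\mathcal{A} = \Pr[\text{proxy} \in S] = |S|/|R|$. An equivalent framing, which I might present instead if it reads more cleanly, is to condition on the number $a = |O \cap S|$ of adversarial outbound peers: conditioned on $a$ the proxy is adversarial with probability $a/|O|$, so $P_\mathcal{A} = \mathbb{E}[\,a/|O|\,]$, and since $O$ is a uniform subset of $R$ linearity gives $\mathbb{E}[a] = |O|\cdot|S|/|R|$, yielding the same result.

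There is no real obstacle here; the statement is essentially a definitional computation. The only points deserving a word of care are (i) that outbound peers are sampled without replacement, so one must argue through the marginal distribution of a single slot rather than pretending the slots are i.i.d.\ Bernoulli; and (ii) that excluding the creator (and, in the relay case, the sender via line~2 of Algorithm~\ref{alg:proxy}) perturbs the ratio only at order $1/|R|$, which is exactly why the clean closed form $|S|/|R|$ is the right statement under the assumptions adopted in this section.
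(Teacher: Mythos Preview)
Your proposal is correct and your second framing---computing $P_\mathcal{A}=\mathbb{E}[a]/|O|$ with $\mathbb{E}[a]=|O|\,|S|/|R|$---is exactly the paper's argument, just stated more carefully (the paper writes $|O||S|/|R|$ as a ``probability'' when it is really the expected count $\mathbb{E}[a]$, then multiplies by $1/|O|$). Your symmetry version and your remarks on sampling without replacement and excluding the creator go slightly beyond what the paper spells out, but are consistent with its simplifying assumptions.
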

	\begin{proof}
		As each node establishes $|O|$ outbound connections, the probability of selecting a node in $R$ as an outbound peer is $|O|/|R|$. 
		As the adversary controls $|S|$ nodes in $R$, the probability of selecting a node in $S$ as an outbound peer is $|O||S|/|R|$.
		
		Since new transactions are sent to a random node in $O$, the probability of selecting a node in $S$ for a single new transaction is:
		\begin{equation}
			P_\mathcal{A} = \frac{1}{|O|}{\cdot}\frac{|O||S|}{|R|} = \frac{|S|}{|R|}.
		\end{equation}
	\end{proof}
	
	Therefore, in the current Bitcoin network, where $|R|\approx 10,000$, $\mathcal{A}$ would have $1/10000=0.0001$ probability of being selected as a proxy when controlling a single node.
	On the other hand, when controlling 1000 nodes (10\% of the reachable network) $\mathcal{A}$ would have $0.1$ probability of being selected for each new transaction sent in the network.
	
	Note that we are not taking into account other protective measures used by the Bitcoin client, such as the limitation in the number of peers from a single subnet, or the use of bucketing \cite{bucketing}.
	Since such measures are explicitly meant to reduce the probability of connecting to multiple adversarial nodes, it is likely that including these factors in the analysis would lower the value of $P_\mathcal{A}$.
	
	\subsubsection{Transaction Mixing}
	To ease the analysis, we study the mixing property of a node over a period of time $T$.
	However, as we will see, results are independent from this value.
	
	We want to calculate the average size of the mixing set of a node, which corresponds to the number of \texttt{PTX} messages received from outbound peers (i.e., nodes in $O$).
	In the following, we will use the word $transaction$ as a synonym of \texttt{PTX} message. 
	
	We use $\rho_I$ and $\sigma_I$ to denote the average number of transactions received from and sent to each node in $I$, respectively.
	Similarly, we use $\rho_O$ and $\sigma_O$ for transactions received from and sent to nodes in $O$.
	
	We study the size of the average mixing set $\mathcal{M}$ for a node having $a$ adversarial outbound peers.
	Note that, when all outbound peers are honest, the mixing set contains all transactions received from such peers (i.e., $|\mathcal{M}| = \rho_O|O|$).
	However, if $\mathcal{A}$ controls one or more outbound peers, the transactions received from these nodes are not useful for mixing (since they are known to $\mathcal{A}$).
	Therefore, in this case, the size of the average mixing set is $|\mathcal{M}| = \rho_O(|O|-a)$.
	
	Given the above, the following equation holds:
	\begin{lemm}
		\label{lem:mixing}
		Let $n$ be a generic node of the network, $g$ be the average number of transactions generated by $n$, $O$ be the set of outbound peers of $n$, $a$ be the subset of $O$ controlled by $\mathcal{A}$, and $p$ be the probability of Diffusion in Algorithm \ref{alg:proxy}.
		Then, the cardinality of the mixing set $\mathcal{M}$ for a node $n$ is:
		\begin{equation}
			|\mathcal{M}| = \frac{g(1-p)}{p}{\cdot}\frac{|O|-a}{|O|}.
		\end{equation}
	\end{lemm}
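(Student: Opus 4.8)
The plan is to reduce everything to computing $\rho_O$, the average number of proxy transactions a node receives from a single outbound peer, since the discussion preceding the statement already establishes $|\mathcal{M}| = \rho_O(|O|-a)$ (the $a$ adversarial outbound peers contribute nothing usable to mixing). It then suffices to show $\rho_O = \frac{g(1-p)}{p|O|}$, and substitution gives the claim. I would obtain $\rho_O$ from a small linear system of flow-conservation equations over the observation window $T$, invoking the stated symmetry that every node behaves identically on average so that $\rho_I,\sigma_I,\rho_O,\sigma_O$ are well defined and common to all nodes.

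The one delicate point — and the step I expect to trip people up — is the accounting of connection orientation: when a node forwards a \texttt{PTX} to one of its \emph{outbound} peers, that peer sees the message arriving from one of its own \emph{inbound} peers. Consequently, summing over the whole network and using symmetry, $\sigma_O|O| = \rho_I|I|$ and, by the same reasoning applied to inbound sends, $\sigma_I|I| = \rho_O|O|$. I would pair these with the two local identities read directly off Algorithms \ref{alg:proxy} and \ref{alg:rules}: a node transmits to an outbound peer exactly when it creates a new transaction or re-proxies one received from an outbound peer, giving $\sigma_O|O| = g + \rho_O|O|$; and it transmits to an inbound peer only when it re-proxies (with probability $1-p$) a transaction received from an inbound peer, giving $\sigma_I|I| = (1-p)\rho_I|I|$.

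The remaining algebra is routine: the conservation identities chain to $\rho_O|O| = (1-p)\rho_I|I| = (1-p)\sigma_O|O|$, hence $\rho_O = (1-p)\sigma_O$; substituting $\sigma_O = \rho_O + g/|O|$ yields $\rho_O\,p = (1-p)g/|O|$, so $\rho_O = \frac{g(1-p)}{p|O|}$, and therefore $|\mathcal{M}| = \rho_O(|O|-a) = \frac{g(1-p)}{p}\cdot\frac{|O|-a}{|O|}$. As a sanity check I would also record the equivalent probabilistic reading: a freshly created transaction performs an absorbing walk whose ``received-from-inbound'' steps diffuse (absorb) with probability $p$ and alternate with deterministic ``received-from-outbound'' steps, so it visits the outbound-reception state $\frac{1-p}{p}$ times in expectation; multiplying by the $g$ transactions a node originates and spreading the network-wide total evenly over the $|O|$ outbound slots of each node reproduces the same $\rho_O$. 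The only modeling caveats worth flagging are that the timeout rule is idealized so that no premature diffusion occurs, and that transactions arriving from the $a$ adversarial outbound peers are excluded from the count, which is precisely the origin of the factor $(|O|-a)/|O|$.
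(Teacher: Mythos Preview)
Your proposal is correct and follows essentially the same flow-balance argument as the paper: both set up the two protocol identities $\sigma_O|O|=g+\rho_O|O|$ and $\sigma_I=(1-p)\rho_I$ together with the edge-orientation identities linking sends on one side to receives on the other, then solve for $\rho_O=(1-p)\sigma_O=\frac{g(1-p)}{p|O|}$ and multiply by $|O|-a$. The only cosmetic difference is that the paper writes the orientation identities per edge as $\rho_O=\sigma_I$ and $\rho_I=\sigma_O$ (stating them ``by definition''), whereas you write them in aggregate form $\sigma_I|I|=\rho_O|O|$ and $\sigma_O|O|=\rho_I|I|$ with the $|I|$ factors cancelling; your absorbing-walk sanity check is a nice addition not present in the paper.
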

	\begin{proof}
		We consider the mixing set in the presence of $a$ adversarial nodes among the outbound peers: $|\mathcal{M}| = \rho_O(|O|-a)$.
		By definition, $\rho_O = \sigma_I$.
		Given the rules defined in Algorithm \ref{alg:rules},
		transactions received from nodes in $I$ ($\rho_I$) are relayed, with probability $1-p$, uniformly at random among nodes in $I$.
		Thus, we have: 
		\begin{equation}
			\sigma_I=(\rho_I|I|(1-p))/|I|=\rho_I(1-p).
		\end{equation}
		
		By definition, $\rho_I=\sigma_O$.
		Let us assume each node generates an average of $g$ transactions during $T$.
		Given that each node sends to nodes in $O$ all of its transactions along with those received by other nodes in $O$, we have:
		$\sigma_O = (g + \rho_O|O|)/|O|.$
		
		Given that $\rho_O=\sigma_I$ and $\rho_I=\sigma_O$, we have:
		\begin{equation}
			\sigma_O = \frac{(g + \sigma_O(1-p)|O|}{|O|} .
		\end{equation}
		
		Isolating $\sigma_O$, we get
		\begin{equation}
			\sigma_O = \frac{g}{|O|p} .
		\end{equation}	
		On the other hand, as $\rho_O=\sigma_I=\rho_I(1-p)=\sigma_O(1-p)$,
		we obtain:
		\begin{equation}	
			\begin{split}
				|\mathcal{M}| &= \rho_O(|O|-a)\\
				&= \sigma_O(1-p)(|O|-a)\\
				&= \frac{g}{|O|p}(1-p)(|O|-a)\\
				&= \frac{g(1-p)}{p}{\cdot}\frac{|O|-a}{|O|}.
			\end{split}
		\end{equation}
	\end{proof}
	
	Note that the size of the mixing set is inversely proportional to $p$.
	In fact, the smaller this value, the longer a transaction will be relayed before being diffused.
	In turn, the more a transaction is relayed, the more it contributes to the mixing of the other nodes.
	
	\subsubsection{Deanonymization precision}
	As previously mentioned, we expect $\mathcal{A}$ to mainly target proxy transactions, since it will be highly unlikely for her to receive diffused transactions from their source.
	Therefore, we first study the precision of $\mathcal{A}$ against the proxy transactions she receives.
	Then, we compute the overall accuracy considering all transactions.
	
	First, let us consider the precision against proxy transactions coming from a single node.
	Note that this only applies to nodes that opened a connection towards an adversarial peer.
	We assume $\mathcal{A}$ does not know incoming proxy transactions (although this might occasionally happen).
	As the first-spy estimator is used, each transaction is linked to the node that relayed it.
	
	Let $D_{proxy}$ be the average precision of $\mathcal{A}$ against proxy transactions coming from a single node. Then:
	\begin{lemm}
		\label{lem:dproxy}
		Let $n$ be a generic node of the network, $O$ be the set of its outbound peers, $a$ be number of peers in $O$ controlled by the eavesdropper adversary $\mathcal{A}$, and $p$ be the probability of Diffusion in Algorithm \ref{alg:proxy}.
		Then, the average precision of $\mathcal{A}$ against proxy transactions from a node $n$ is:
		\begin{equation}
			D_{proxy} = \frac{p}{1-\frac{a(1-p)}{|O|}} .
		\end{equation}
	\end{lemm}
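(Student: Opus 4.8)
The plan is to reduce $D_{proxy}$ to a simple composition ratio between the transactions that node $n$ originates and the transactions it merely relays onward, and then to substitute the value of $|\mathcal{M}|$ from Lemma~\ref{lem:mixing}.

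First I would fix the observation window $T$ and describe exactly which transactions enter $\mathcal{A}$'s deanonymization set for $n$. Since $\mathcal{A}$ is an outbound peer of $n$ and uses the first-spy estimator, it links to $n$ every \texttt{PTX} message it receives directly from $n$ whose transaction it does not already know. Over $T$, the \texttt{PTX} messages $n$ forwards to its outbound peers split into three groups: (i) the $g$ transactions $n$ itself created; (ii) the proxy transactions $n$ received from its \emph{honest} outbound peers and re-proxied --- by definition these are exactly the mixing set $\mathcal{M}$, and, by the standing assumption stated before the lemma, $\mathcal{A}$ does not know them in advance; (iii) the proxy transactions $n$ received from its \emph{adversarial} outbound peers and re-proxied --- these $\mathcal{A}$ already knows, so they never enter the deanonymization set (and, by the source-exclusion step of Algorithm~\ref{alg:proxy}, are never routed back to the adversarial peer that sent them). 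Hence only groups (i) and (ii) contribute to $\mathcal{A}$'s deanonymization set for $n$, and among them precisely group (i) is attributed correctly.

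Next I would argue that what $\mathcal{A}$ sees is an unbiased sample of groups (i) and (ii). Each such transaction is forwarded to a proxy picked uniformly at random among $n$'s $|O|$ outbound peers (the source-exclusion rule only perturbs the routing of group (iii), whose source is the excluded adversarial peer). Therefore the $a$ adversarial outbound peers jointly receive the same fraction $a/|O|$ of group (i) and of group (ii), and this common factor cancels in the ratio, giving $D_{proxy} = g/(g+|\mathcal{M}|)$. Substituting $|\mathcal{M}| = \frac{g(1-p)}{p}\cdot\frac{|O|-a}{|O|}$ from Lemma~\ref{lem:mixing}, cancelling $g$, clearing the denominator by $p|O|$, and collecting the terms $p|O| + (1-p)(|O|-a) = |O| - a(1-p)$ yields $D_{proxy} = \frac{p}{1-\frac{a(1-p)}{|O|}}$, which, as the statement implicitly requires, is independent of $T$ and of $g$.

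The main obstacle is the unbiased-sampling step: one must check carefully that the source-exclusion rule and the ``$\mathcal{A}$ already knows it'' bookkeeping interact cleanly --- i.e.\ that the only transactions whose routing is non-uniform (group (iii)) are exactly those removed from the deanonymization set anyway, so no bias leaks into the ratio $g/(g+|\mathcal{M}|)$. A secondary point worth making explicit is the reliance on the assumption, already flagged before the lemma, that $\mathcal{A}$ does not otherwise learn the mixing-set transactions; if it does, the excluded fraction of group (ii) only grows, so $D_{proxy}$ can only increase, and the formula should be read as the defender-optimistic estimate rather than a hard upper bound.
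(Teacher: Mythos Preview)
Your proposal is correct and follows essentially the same route as the paper: reduce the precision to the composition ratio $D_{proxy}=g/(g+|\mathcal{M}|)$ by observing that the adversarial outbound peers see a uniform $a/|O|$ share of both the self-generated and the mixing transactions, and then substitute the expression for $|\mathcal{M}|$ from Lemma~\ref{lem:mixing}. Your write-up is in fact more explicit than the paper's about the three-way split of outgoing \texttt{PTX} traffic, the role of the source-exclusion rule, and the standing assumption that $\mathcal{A}$ does not already know the mixing-set transactions, but the underlying argument is the same.
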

	\begin{proof}
		We consider a node $n$ generating $g$ transactions, and being connected to $a$ outbound peers controlled by $\mathcal{A}$.
		As both new and relayed transactions are distributed among nodes in $O$, each such node receives on average $g/|O|$ new transactions plus $|\mathcal{M}|/|O|$ mixing transactions.
		Since $\mathcal{A}$ associates all transactions to $n$, she will get $g/|O|$ correct guesses over $(g+|\mathcal{M}|)$ transactions received.
		
		By Lemma \ref{lem:mixing}, we get:
		\begin{equation}
			\begin{split}
				D_{proxy}&=(g/|O|)/((g+|\mathcal{M}|)/|O|)\\
				&=g/(g+|\mathcal{M}|)\\
				&=g/(g+g\frac{1-p}{p}\frac{|O|-a}{|O|})\\
				&=\frac{p}{1-\frac{a(1-p)}{|O|}}.
			\end{split}
		\end{equation}
	\end{proof}

	To calculate the overall precision, we consider a network of $|N|$ nodes, $|R|$ of which are reachable.
	Let $D_{overall}$ be the overall precision of $\mathcal{A}$ against transactions generated by nodes in $N$. Then, the following equation holds:
	\begin{lemm}
		Let $R$ be the set of reachable nodes, and $S$ be subset of adversarial nodes in $R$.
		Then, the overall average precision of the eavesdropper adversary $\mathcal{A}$ against new transactions in the network is:
		\begin{equation}
			D_{overall} = \frac{|S|}{|R|}.
		\end{equation}
	\end{lemm}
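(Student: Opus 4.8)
The plan is to reduce the overall precision to a single per-transaction event and then count. Under the first-spy estimator the adversary links a transaction to whichever node announces it to $\mathcal{A}$ first, so a new transaction created by a node $n$ is deanonymized correctly exactly when $\mathcal{A}$'s earliest observation of that transaction is the message it gets \emph{from $n$ itself}. By Algorithm \ref{alg:rules}, a freshly created transaction leaves its creator exactly once --- as a \texttt{PTX} sent to a uniformly random outbound peer --- so $\mathcal{A}$ can hear it from $n$ before anyone else has seen it if and only if that first proxy is an adversarial node. If instead the first proxy is honest, the transaction has been handed off and every later appearance of it, whether relayed further in the proxying phase or broadcast in the diffusing phase, reaches $\mathcal{A}$ from a node other than $n$. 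Hence the event ``$\mathcal{A}$ attributes the transaction to $n$ correctly'' coincides, up to negligible corrections, with the event ``$n$ picked an adversarial first proxy''.

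With this reduction in hand, the argument is three short steps. First, I would invoke Lemma \ref{lem:advprox}: for any single new transaction, averaging over the random outbound topology and over the uniform choice of proxy, the probability that the first proxy lies in $S$ is $P_\mathcal{A} = |S|/|R|$. Second, I would justify the two directions of the equivalence; the ``if'' direction is immediate since at hand-off time $n$ is the unique holder of the transaction, and the ``only if'' direction rests on the fact that $n$ takes no further special action on that transaction ahead of the rest of the network --- in particular the timeout re-broadcast of Algorithm \ref{alg:proxy} fires only with small probability, and even when it does the diffused copies $\mathcal{A}$ receives are discarded by assumption. Third, I would count: over a time window $T$ the network generates $g|N|$ new transactions on average, and since $\mathcal{A}$ connects to every reachable node it observes each transaction and makes exactly one first-spy guess per transaction, for $g|N|$ guesses in total; by the equivalence and Lemma \ref{lem:advprox} the expected number of correct guesses is $g|N|\cdot|S|/|R|$. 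Dividing gives $D_{overall}=|S|/|R|$, with no dependence on $T$, on $|O|$, or on how many inbound connections $\mathcal{A}$ opens, in line with the claim that opening extra connections towards a node or connecting to every node buys $\mathcal{A}$ nothing while deploying more reachable adversarial nodes does.

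The step I expect to be the main obstacle is making the ``only if'' direction airtight: one has to bound every way $\mathcal{A}$ could still first hear a transaction from its creator despite an honest first proxy --- a re-proxy hop looping back to the creator, a timeout-triggered re-broadcast by the creator, or $\mathcal{A}$ not being able to recognize and discard a diffused copy --- each of which contributes only a negligible term but must be argued to be so, and this is precisely why the identity is an average- rather than a worst-case statement. A secondary subtlety is reconciling this clean average with Lemma \ref{lem:dproxy}: a node that happens to have an adversarial outbound peer is deanonymized with the strictly larger precision $D_{proxy}$, but that figure scores only the transactions $\mathcal{A}$ attributes to that one node; once correct guesses are summed over all nodes --- each contributing its own fresh transactions --- and divided by the total transaction count, both $D_{proxy}$ and the mixing-set size of Lemma \ref{lem:mixing} cancel out and only $P_\mathcal{A}$ remains.
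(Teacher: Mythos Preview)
Your proposal is correct and follows essentially the same route as the paper: invoke Lemma~\ref{lem:advprox} to get the per-transaction probability $|S|/|R|$ that the first proxy is adversarial, multiply by the $g|N|$ transactions generated, and divide by the total. The paper's own proof is in fact terser than yours --- it simply asserts that $\mathcal{A}$ receives $gN\cdot|S|/|R|$ transactions from their source and hence guesses them correctly, without explicitly arguing the equivalence between ``correct first-spy guess'' and ``adversarial first proxy'' that you take care to justify; your discussion of the ``only if'' direction and of the reconciliation with Lemma~\ref{lem:dproxy} goes beyond what the paper spells out.
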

	\begin{proof}
		Let us consider all transactions generated by nodes in $N$, that is $gN$.
		By Lemma \ref{lem:advprox}, each transaction is sent to an adversarial proxy with probability $|S|/|R|$.
		As such, $\mathcal{A}$ will receive $gN(|S|/|R|)$ transactions from their source (thus guessing them correctly).
		Dividing correct guesses over the total amount of transactions we have:
		
		\begin{equation}
			\frac{N\cdot\frac{|S|}{|R|}g}{gN} =  \frac{|S|}{|R|}.
		\end{equation}
	\end{proof}
	Therefore, the overall precision exclusively depends on the portion of reachable nodes controlled by $\mathcal{A}$.
	
	\subsection{Complexity and Efficiency}
	As it can be seen by the Clover procedures (Algorithms \ref{alg:proxy} and \ref{alg:rules}), the algorithm followed by network nodes has only plain instructions and if/then statements, without any loop.
	Since its complexity is constant (O(1)), Clover does not add any computational overhead to the Bitcoin protocol.
	
	Similarly, there is no expected overhead in the number of exchanged messages.
	In fact, like in the Diffusion protocol, transactions are propagated through all nodes of the network, without repetitions (although this can occasionally occur in Clover).
	Instead, since transactions in the proxying phase are transmitted directly (without previously announcing them), the total number of messages exchanged per node is expected to be lower than Diffusion.
	
	On the other hand, like other similar solutions, the Clover protocol introduces a delay in the broadcast of a transaction.
	Specifically, this delay is proportional to the number of hops through which transactions go during the proxying phase.
	
	In this respect, two factors must be considered: the number of messages needed for each hop, and the number of hops.
	
	\subsubsection{Hop delay}
	As described in Section \ref{sec:spreading}, in the Bitcoin protocol, each transaction propagation hop requires three messages: \texttt{INV}, \texttt{GETDATA}, and \texttt{TX}.
	This strategy is used in Diffusion to avoid sending transaction data to nodes that already have it.
	
	In Clover, this is not needed, since proxy transactions are normally unknown to the recipient.
	Instead, transaction data is transmitted directly with a single \texttt{PTX} message.
	Therefore, only one extra message is needed for each hop in the proxying phase.
	
	\subsubsection{Proxy hops}
	As previously stated, a higher number of relays during the proxying phase corresponds to a bigger mixing set for nodes in the network (and hence, better anonymity).
	Nevertheless, if this number is too high, it can cause an excessive propagation delay.
	Therefore, it is essential to choose a target value that seeks a compromise between efficiency and effectiveness.
	
	Note that the average number of hops directly depends on the probability $p$.
	In particular, the lower this value, the higher the number of hops.
	Therefore, we can choose $p$ to obtain a target number of hops ($h$).
	
	In Section \ref{sec:experiments}, we calculate the relation between $p$ and $h$, and experimentally evaluate the optimal target number of hops.

	\section{Comparison with State-of-the-Art Solutions}
	We compare Clover with other known anonymity-preserving propagation protocols.
	To the best of our knowledge, the only similar solutions proposed to date are Dandelion \cite{venkatakrishnan2017dandelion} (extended with Dandelion++ \cite{fanti2018dandelion++}), and the one proposed by Franzoni et al.~\cite{franzoni2020improving}.
	
	We review the main differences with Clover, and compare their complexity, efficiency, and security.
	
	\paragraph{Dandelion}
	This protocol, proposed by Fanti et al. in \cite{venkatakrishnan2017dandelion} and extended in \cite{fanti2018dandelion++}, is the first solution to have tried protecting transaction anonymity by breaking the symmetricity of propagation.
	Similar to Clover, Dandelion consists of two phases: a first lineal relay phase, called \textit{stem}, and a second broadcasting phase, called \textit{fluff}, where transactions are propagated using Diffusion.
	Transactions in the stem phase are relayed according to a propagation graph (a circle in Dandelion, and a 2-regular graph in Dandelion++), which is built by participating nodes prior to run the protocol.
	To that purpose each node selects one or two possible proxies (depending on the protocol version) uniformly at random among their outbound peers.
	By using a limited set of proxies, Dandelion aims at maximizing the mixing property since all transactions are relayed through the same path.
	
	Both in Dandelion and Clover, transactions are propagated only through outbound connections, minimizing the risk of proxying new transactions through adversarial nodes.
	However, Dandelion use transactions received from inbound peers for mixing, thus leaving space for the adversary to improve precision by controlling a large portion of inbound connections.
	For instance, let us consider the case in which an adversarial node is selected as a proxy by a victim;
	when this occurs, the adversary can open as many inbound connections as possible towards the victim to improve her chances of being used as inbound peer in the propagation graph.
	Whenever the adversary controls both one or more inbound peers and one or more outbound peers of such graph, she will be able to track all transaction used for mixing, and thus easily detect those generated by the victim.
	In Clover, we prevent this risk by only mixing with transactions relayed by other outbound peers.
	In other words, the adversary cannot gain any precision by opening inbound connections towards a victim.
	
	The use of a propagation graph in Dandelion not only increases the complexity of the protocol, but also introduces an additional attack vector for the adversary, which can gain precision by learning the topology of such graph.
	To avoid this risk, such graph has to be renewed every ten minutes, thus further increasing the complexity of the protocol.
	Clover avoid these issues by making use of all connected peers, and selecting proxies at random for each relay operation.
	
	With respect to the delay introduced by the initial proxying phase, Clover also outperforms Dandelion by transmitting transactions directly, without using the three-step relay process described in Section \ref{sec:txprop}.
	Roughly speaking, one hop in the stem phase of Dandelion introduces the delay of three hops in Clover.
	In other words, the delay introduced by each proxy hop in Clover is approximately one third than in Dandelion.
	Note that this allows for a longer proxying phase, which, in turn, means better anonymity properties (as the security of proxy transactions also depends on the average number of hops in such phase).
	
	Finally, a major limitation of Dandelion is to be only compatible with reachable nodes, which represent only the 10\% of the whole Bitcoin network.
	This is due to the fact that it requires nodes to have inbound connections.
	Conversely, Clover can also works when only outbound connections are available, thus being compatible with all nodes in the network.

	\paragraph{Reachability-dependent Anonymous Propagtion (ReAP)}
	Following an approach similar to Dandelion, Franzoni et al.~\cite{franzoni2020improving} proposed an alternative protocol that breaks the symmetry of transaction propagation by leveraging unreachable nodes.
	We call this protocol \textit{Reachability-dependent Anonymous Propagation}, or ReAP.
	
	In ReAP, transactions are again propagated in two phases, the first one of which have them relayed linearly through a sequence of proxy nodes.
	In the initial phase transactions are relayed through an alternate sequence of reachable and unreachable nodes.
	In particular, reachable nodes proxy transactions via unreachable nodes, and viceversa.
	This strategy has a twofold goal: to improve the involvement of unreachable nodes in transaction propagation, and to limit the ability of the adversary to observe the propagation pattern through the network.
	Their approach is based on the observation that the adversary is unable to open connections towards unreachable nodes, and hence cannot observe propagation through such nodes.
	
	Similar to Clover, ReAP does not require building a graph, is compatible with unreachable nodes, and minimizes delay by relaying transactions directly (i.e., without first announcing them) during the proxying phase.
	However, there are two major flaws in this protocol.
	First of all, reachable nodes require unreachable nodes to be connected in order to implement the protocol, which is not always the case. For instance, newly-joined nodes will likely have no inbound peers until their address is advertised to enough peers.
	Clover has no such limitation and can be readily be used by any node as soon as they connect to the network.
	
	The second major issue in ReAP lies in the ability of the adversary to open multiple connections from unreachable nodes towards reachable ones, which increases her chances to be selected as proxy for new transactions.
	Furthermore, this allows her to track a many transactions in the mixing set of the target, which, in turn, helps her improve precision in deanonymization.
	In Clover, we prevent this issue by having nodes mix only with transactions from other outbound peers, thus minimizing the ability of the adversary to track transactions in the mixing set of a target.
	
	Finally, the ReAP design has to deal with the complexity of determine the reachability of each node.
	In fact, this information is not explicit in the protocol and can only be inferred by probing the public listening address of a node.
	However, such address is not always advertised by nodes, making it hard to establish whether an inbound peer is reachable or not.
	Clover avoids such complexity by only differentiating between outbound and inbound connections, whose difference is well defined by the Bitcoin protocol and can be easily verified at any time.
	
	Finally, different from ReAP, we prove the anonymity guarantees of our protocol, both by formal analysis and experimental results.
	
	\paragraph{Summmary}
	In the previous paragraphs,  
	we compared Clover with state-of-the-art anonymous transaction propagation protocols.
	In particular, we discussed the complexity of their design, their scope, their security against the eavesdropper adversary, and the overhead introduced by the anonymity phase.
	
	In Table \ref{table:comparison}, we summarize this comparison. 
	As explained, both Dandelion/Dandelion++ and ReAP requires extra computation in order to enable the protocol; in contrast, Clover can be used without any previous operation.
	As for the delay introduced by the proxying phase, we saw how Clover and ReAP both minimize it to one extra message per hop.
	For what concerns the scope of the protocol, Clover is the only one that can be used by all nodes in the network, since Dandelion/Dandelion++ is only compatible with reachable nodes, while ReAP cannot be used by newly-joining reachable nodes.
	Finally, both Dandelion/Dandelion++ and ReAP allow the adversary to gain extra precision in deanonymization by means of side channels:
	in the first case, by learning the topology of the graph, and in the latter case, by connecting to a reachable node with multiple unreachable peers.
	On the contrary, an eavesdropper adversary can only gain precision against Clover by controlling a larger portion of reachable nodes.

	\begin{table}[t]
		\centering
		\begin{adjustbox}{width=0.8\textwidth}
			\begin{tabular}{|c|c|c|c|c|}
				\hline
				\textbf{Protocol} & \begin{tabular}[c]{@{}c@{}}\textbf{Extra}\\\textbf{Computation}\end{tabular} & \textbf{Overhead} & \textbf{Scope} & \begin{tabular}[c]{@{}c@{}}\textbf{Extra}\\\textbf{Precision}\end{tabular} \\
				\hline	
				\begin{tabular}[c]{@{}c@{}}\vspace{0pt}\\\textbf{Dandelion} \cite{venkatakrishnan2017dandelion} \\\textbf{Dandelion$++$} \cite{fanti2018dandelion++}\vspace{5pt}\end{tabular}  & 
				Yes    & 3 msg/hop & Reachable only & Yes \\ 
				\hline	
				\begin{tabular}[c]{@{}c@{}}\vspace{0pt}\\\textbf{ReAP} \cite{franzoni2020improving}\vspace{5pt}\end{tabular} & 
				Yes & 1 msg/hop & \begin{tabular}[c]{@{}c@{}}All except \\ new reachable nodes\end{tabular} & Yes \\ 
				\hline
				\begin{tabular}[c]{@{}c@{}}\vspace{0pt}\\\textbf{Clover}\vspace{5pt}\end{tabular} &
				No & 1 msg/hop & All & No \\ \hline
			\end{tabular}
		\end{adjustbox}
		\caption{Comparison between Clover, Dandelion, and ReaP}
		\label{table:comparison}
	\end{table}
	
	%
	%
	%
	
	In summary, Clover combines the strengths of previous solutions while mitigating their limitations and security risks.
	The resulting protocol has strong anonymity guarantees for all nodes in the network, with a simple design and minimum overhead.

	
	\section{Experimental Results}
	\label{sec:experiments}
	To evaluate the effectiveness of our protocol against an eavesdropper adversary, we performed a series of experiments in a simulated environment.
	In each series, we varied the portion of the network controlled by the adversary, so as to study the resilience of the protocol.
	
	We compare our results with those obtained using Diffusion in the same simulation setting.
	Our results show that Clover reduces the precision of the first-spy estimator up to ten times in the best case,
	while significantly increasing the cost of the attack for the adversary.
	
	\subsection{Proof of Concept and Simulation}
	We set our experiments in a private Bitcoin network using the reference client (Bitcoin Core 0.20), which we modified to implement the Clover protocol.
	For the experiments running the Diffusion protocol, we used the original implementation without modifications.
	
	\paragraph{Setting}
	To run the simulations we executed nodes in Regtest (private) mode using Docker containers.
	Each test was run on a network of 100 reachable nodes randomly connected to each other.
	In each simulation, we had nodes randomly generate transactions during 10 minutes.
	On average, in each run we generated approximately 300 transactions, with an average of 3 transactions per node.
	For each setting, we run 3 simulations and then computed the average.
	
	Being a simulated environment, our experimental setting might not fully represent the actual Bitcoin network.
	In particular, unlike our simulation, connections in the real network are not evenly distributed among nodes.
	Instead, stable nodes often maintain more connections than others, while newly-joined nodes typically require several hours before having a stable number of inbound peers.
	This might be an advantage for the adversary in the case it runs a well-known, stable node.
	However, in the real network, the adversary will be likely more limited than in our simulation, due to the fact that she needs to deploy several nodes to perform the deanonymization attack. 
	In other words, deanonymization results in our simulation are likely to be better than they would be in the real world.
	
	Unlike the real Bitcoin network, our simulation maintains a stable topology during the experiments.
	The stability of a controlled environment allows us to better evaluate the effectiveness of our propagation protocol against deanonymization attacks.
	Moreover, it allows a more meaningful comparison between Clover and Diffusion, since they can be tested in similar conditions, without depending on the randomness of the real network.
	
	Due to technical reasons, we also exclude unreachable nodes in our simulations.
	Note that this has no relevance for Clover, since we showed in Section \ref{sec:analysis} how precision exclusively depends on reachable nodes, but it might slightly affect the results for the Diffusion protocol.
	However, although unreachable nodes are theoretically relevant in Diffusion,
	studies showed how their involvement in the transaction propagation is extremely low compared to their number \cite{wang2017towards}, with as little as the 0.001\% of nodes sending transaction messages.
	We then consider this as a minor limitation.
	
	Overall, despite the differences between our simulated environment and the actual Bitcoin network, we believe our results are proper indicator of the security gains of Clover over Diffusion.

	\paragraph{Adversary}
	We varied the number of adversarial nodes from 1 to 30, corresponding to a range between 1\% and 30\% of the reachable network.
	These nodes are chosen randomly among those already deployed, so that they are well connected to the rest of the network.
	Note that this is the worst-case scenario since it assumes the adversary controls well-established nodes in the network.
	In addition, when testing against Diffusion, each adversarial node connects to all reachable peers (recall that this is not necessary when testing Clover, since it does not improve the precision of the adversary).
	
	All adversarial nodes log incoming \texttt{INV} and \texttt{PTX} messages.
	At the end of the simulation, these logs are merged and ordered by timestamp.
	Then, the first-spy estimator is applied, linking each transaction to the first peer that advertised or transmitted it to any of the adversarial nodes.
	
	\paragraph{Timeout}
	The diffusion timeout has been set to fit the local simulation environment, where transactions are produced and spread faster than the real network.
	In particular, verification timeout has been set to 1 minute.
	
	\subsection{Simulation Results}
	We evaluated precision against adversaries controlling 1\%, 2\%, 5\%, 10\%, 20\%, and 30\% of the network. 
	Each adversary is first tested against Diffusion, and then against Clover with broadcast probability $p$ equal to 0.2, 0.3, and 0.4.
	Overall precision is calculated as the average among all tests with a given adversarial power.
	Results are shown in Figure \ref{fig:results}.
	\begin{figure}[tbp]
		\centerline{\includegraphics[width=0.7\columnwidth]{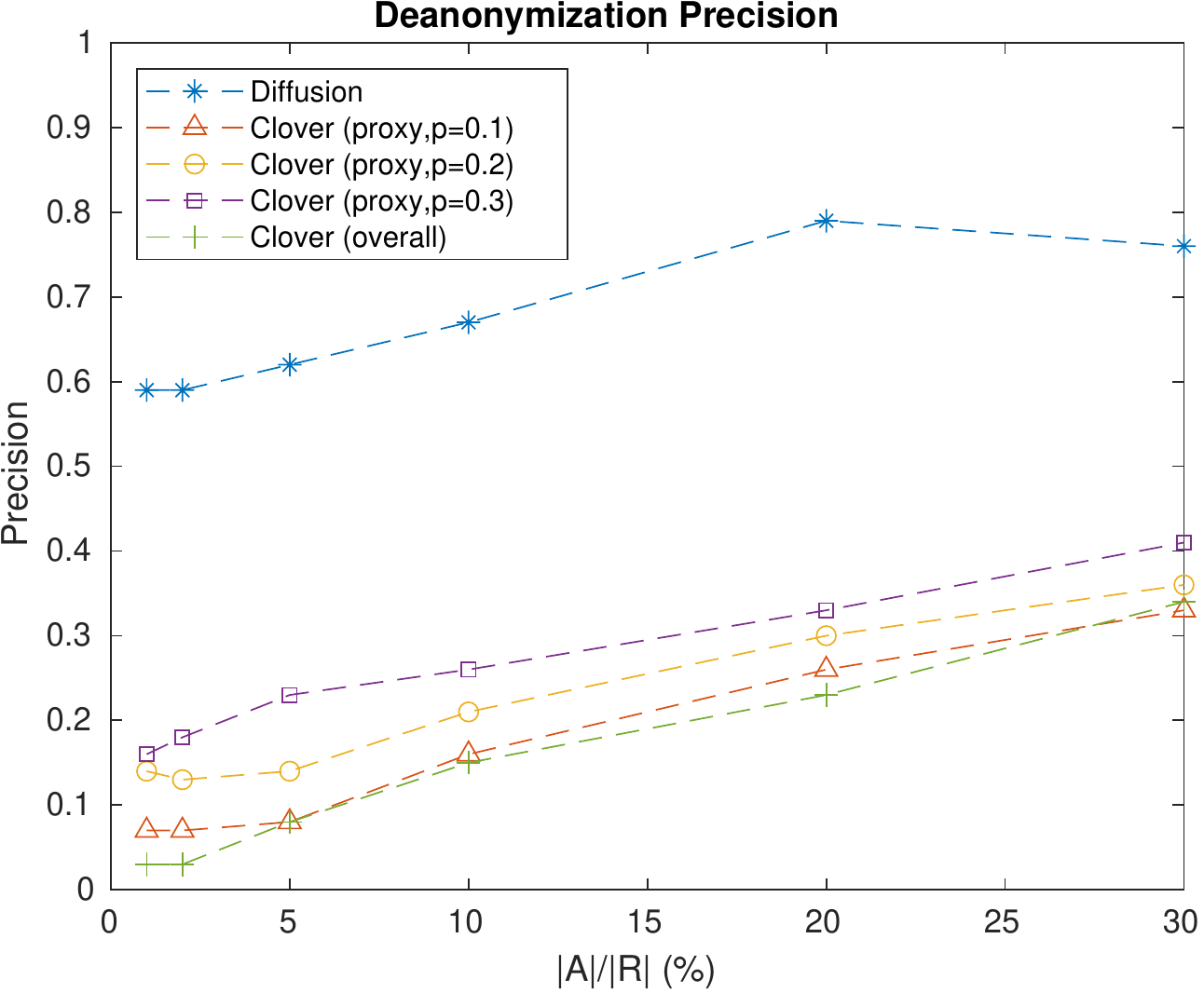}}
		\caption{Deanonymization precision against Clover}
		\label{fig:results}
	\end{figure}
	
	In our simulations, the precision of the adversary against Diffusion showed to be very high even controlling a small portion of the network.
	In particular, when controlling from 1\% to 5\% of the reachable network, the adversary had a precision as high as 0.6. 
	This value raises to 0.7 when the number of adversarial nodes reaches the 20\% of the network.
	
	On the other side, precision against Clover, although growing faster in the number of adversarial nodes, showed to be much lower than against Diffusion.
	Specifically, overall precision is from 10 times smaller (0.05), for adversaries controlling from 1\% to 5\% of the network, to 3 times smaller (0.33), for adversaries controlling from 10\% to 30\% of the network.
	
	For what concerns precision against proxy transactions,
	we have, as expected, better results for lower values of $p$.
	In particular, with $p=0.3$, precision ranges from 0.16 to 0.4, while, for $p=0.2$, the adversary showed an average precision of 0.14 when the controlling 1-5\% of the reachable nodes, and up to 0.35 when controlling 30\%.
	When setting $p=0.1$ precision against proxy transactions gets as low as the overall precision, indicating a near-optimum level of mixing.
	
	Notably, the precision of the adversary against Clover never exceeded that against Diffusion.
	This means that Clover against a strong adversary controlling 30\% of the network outperforms Diffusion against the weakest adversary controlling 1\% of the network.
	
	A major result of our experiments is that it shows how attacking Clover is substantially more expensive for the adversary (who need to deploy numerous nodes), compared to Diffusion, without even reaching the same levels of accuracy.
	
	\paragraph{Hops}
	According to our experiments, the average number of hops is inversely proportional to the probability $p$.
	In particular, we found the following relation to hold:
	\begin{equation}
		h \approx \frac{(1-p)}{0.15}  .
	\end{equation}
	For instance, with probability $p=0.1$, transactions are relayed through an average of 6 hops, during the proxying phase.

	\subsection{Comparison with Dandelion++}
	To further demonstrate the benefits of Clover, we experimentally compared its results against Dandelion++.
	To that purpose, we run the same set of experiments, in the same setting, using the official implementation of Dandelion++.
	We compare its results against Clover when the probability of diffusion is set to $p=0.1$, since this is also the value used by Dandelion++.
	Results are shown in Figure \ref{fig:results-dandelion}.
	
	As shown in the graphic, the overall precision of the eavesdropper adversary is comparable between Clover and Dandelion++.
	In contrast, precision against transactions in the proxying phase is visibly higher in Dandelion++.
	This is probably due to the fact that Clover better utilizes available transactions for the mixing property.
	In particular, while Clover distributes all incoming proxy transactions among all proxy nodes, Dandelion++ links each outbound proxy to a specific inbound peer. 
	Additionally, all new transactions are proxied through the same node (during one epoch).
	
	For the same reason, the results we obtained for Dandelion++ against proxy transactions were highly variable within a single setting.
	The irregularity of the corresponding line of the graphic reflects this variability.
	
	\begin{figure}[tbp]
		\centerline{\includegraphics[width=0.7\columnwidth]{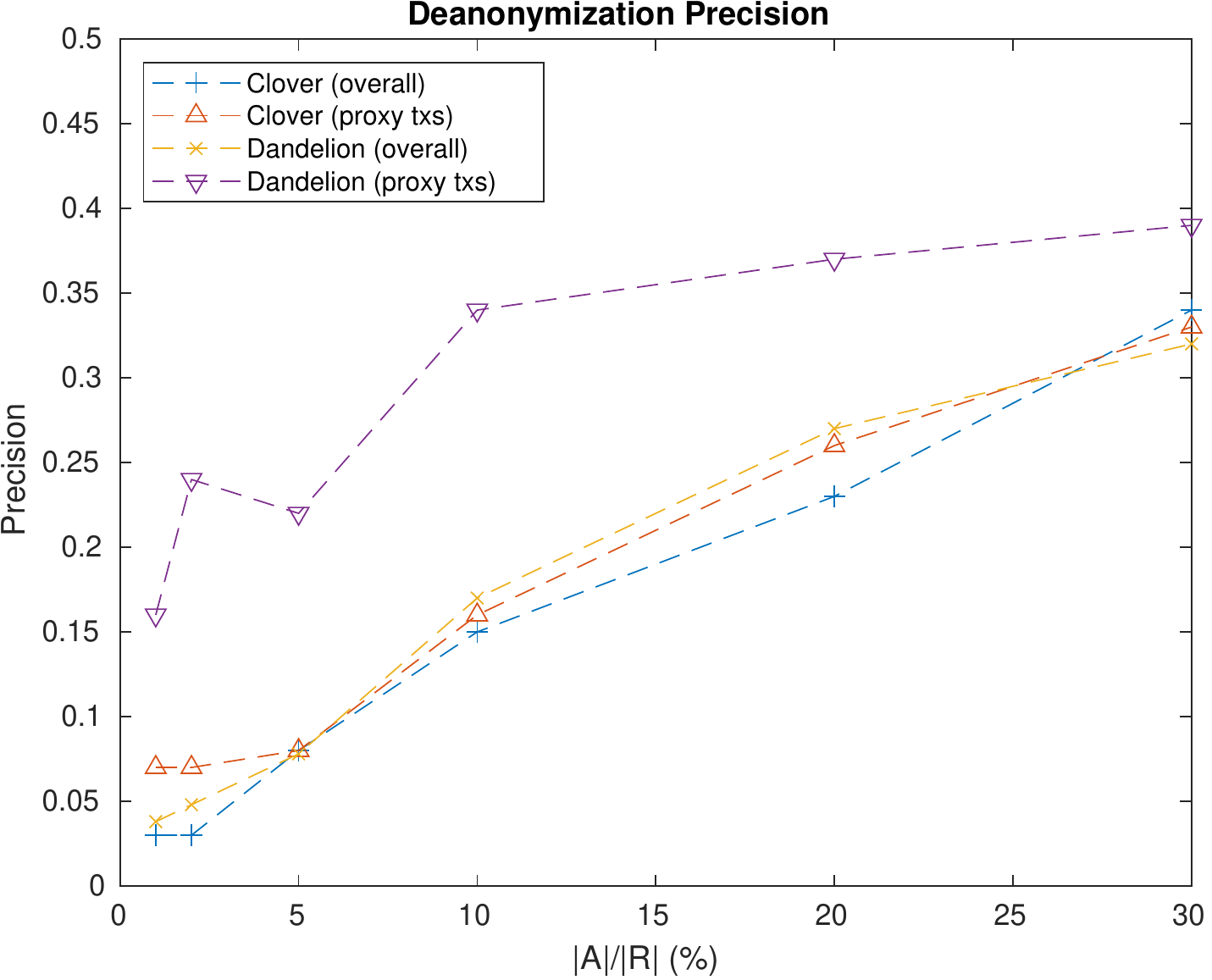}}
		\caption{Precision against Clover ($p{=}0.1$) and Dandelion}
		\label{fig:results-dandelion}
	\end{figure}
	
	In summary, when compared to Dandelion++, Clover shows a similar level of anonymity, but with better, and more stable, results for transactions in the proxying phase.
	
	\section{Related Work}
	Anonymity in Bitcoin has been widely addressed in research \cite{herrera2015research}.
	In particular, two major directions have been explored in relation to deanonymization.
	
	On the one side, there is blockchain analysis \cite{meiklejohn2013fistful, reid2013analysis}, which aims at linking Bitcoin addresses (and all related transactions) to real-world identities. 
	This is done by crossing publicly available information (e.g., known addresses or transactions, known services, ...) with address clustering: since all transactions are linked to each other, it is possible to trace coins throughout the whole blockchain.
	To prevent this kind of attack, users can use \textit{mix services}, which allow them to shuffle their coins with other users so as to prevent the possibility of tracing back coins in the blockchain.
	
	On the other side, there is traffic analysis, which aims at linking transactions to the IP address from which it originated, which would likely reveal the owner of the coins spent by such transactions.
	This is typically done by connecting to the whole network and monitoring transaction messages.
	Note that these attacks can go beyond the capacity of blockchain analysis, since they do not take into account Bitcoin addresses but only network packages.
	In other words, network analysis can deanonymize a transaction even when this is anonymized through a mix service.
	Although anonymity networks, like Tor or I2P, can be used to protect from such attacks, these services are not commonly used by Bitcoin users, and might even lead to other deanonymization attacks \cite{biryukov2015tor}.
	Therefore, network-level anonymity is still a major concern for Bitcoin users.
	In the following, we review the most relevant works related to network-level deanonymization.
	
	\paragraph{Network-Level Deanonymization}
	Dan Kaminsky~\cite{kaminsky2011black} is the first one to propose the general first-spy approach. 
	Based on the observation that nodes announce their transactions to all peers, he proposes to connect to all nodes and simply associates each transaction to the first node that announces it.
	
	Koshy et al.~\cite{koshy2014analysis} are among the first ones to apply the first-spy approach on the Bitcoin network. 
	In their experiment, they connect to all nodes during 5 months and analyze the relay patterns of each transaction. 
	Their results showed an accuracy of around 20\% using very conservative thresholds.
	
	In \cite{biryukov2019deanonymization}, Biryukov et al. propose a novel deanonymization technique targeting different cryptocurrency networks based on propagation analysis.
	Their approach is based on rumor centrality.
	According to their estimates, this technique is feasible even for low-budget adversaries.
	
	Differently from other works, which only apply to reachable nodes, Biryukov et al.~\cite{biryukov2014deanonymisation, biryukov2015tor} specifically target unreachable nodes and nodes using Tor. Their approach is based on fingerprinting techniques and reaches accuracy levels between 11\% and 60\%, depending on the stealthiness of the attacker.
	Since their technique only works during a single session, \cite{mastan2018new} propose a complementary technique that allows identifying unreachable nodes over multiple sessions.
	
	In \cite{fanti2017anonymity} and \cite{fanti2017deanonimization}, Fanti et al. theoretically analyze the anonymity properties of Trickle and Diffusion protocols against an eavesdropper adversary using first-spy and rumor-centrality-based estimators.
	Their results show that both protocols have poor anonymity guarantees and identify the symmetry of the propagation pattern as the core issue.

	\section{Conclusion}
	Transaction anonymity is considered an essential feature of cryptocurrencies.
	However, while great improvements have been made at the application level, the network level is still vulnerable to cheap and effective deanonymization attacks.
	Recent proposals have identified and addressed the issues in the propagation protocol that lead to such attacks.
	Nonetheless, the complexity of the proposed solutions might hinder their adoption in real networks.
	
	In this paper, we proposed an alternative approach to transaction propagation for the Bitcoin network, which adopts a simple design that eases its analysis and implementation.
	We theoretically studied its anonymity guarantees against powerful adversaries and experimentally evaluated its effectiveness through simulations, comparing results with the protocol currently used in Bitcoin.
	
	Our experimental results show that the deanonymization precision of the eavesdropper adversary adopting the first-spy estimator is up to 10 times smaller in the best case.
	We believe our solution can be easily adopted in real cryptocurrency networks and serve as a basis for future advances in the field.

	\printbibliography
\end{document}